\documentclass[conference]{IEEEtran}
\IEEEoverridecommandlockouts
\usepackage{cite}
\usepackage{amsmath,amssymb,amsfonts}
\usepackage{algorithmic}
\usepackage{graphicx}
\usepackage{textcomp}
\usepackage{xcolor}
\usepackage{commath}

\usepackage{verbatim}       
\usepackage{amsmath}
\usepackage{amsthm}
\usepackage{amssymb}
\usepackage{mathtools}
\usepackage{mathrsfs}
\usepackage{algorithmic}
\usepackage{algorithm}
\usepackage[caption=false,font=footnotesize]{subfig}

\usepackage{microtype}

\newtheorem{lemma}{Lemma}
\newtheorem{corollary}{Corollary}
\newtheorem{proposition}{Proposition}

\newtheorem{assumption}{Assumption}

\newcommand{\E}{\mathbb{E}}

\newcommand{\diag}{\ensuremath{\operatorname{diag}}}

\def\BibTeX{{\rm B\kern-.05em{\sc i\kern-.025em b}\kern-.08em
    T\kern-.1667em\lower.7ex\hbox{E}\kern-.125emX}}
\begin{document}

\title{Numerically robust square root implementations of statistical linear regression filters and smoothers
\thanks{Filip Tronarp was partially supported by the Wallenberg AI, Autonomous
Systems and Software Program (WASP) funded by the Knut and Alice Wallenberg Foundation.}
}

\author{\IEEEauthorblockN{Filip Tronarp}
\IEEEauthorblockA{\textit{Centre for Mathematical Sciences} \\
\textit{Lund University}\\
Lund, Sweden \\
filip.tronarp@matstat.lu.se}
}

\maketitle

\begin{abstract}
In this article, square-root formulations of
the statistical linear regression filter and smoother are developed.
Crucially, the method uses QR decompositions rather than Cholesky downdates.
This makes the method inherently more numerically robust than the downdate based methods,
which may fail in the face of rounding errors.
This increased robustness is demonstrated in an ill-conditioned problem,
where it is compared against a reference implementation in both double and single precision arithmetic.
The new implementation is found to be more robust, when implemented in lower precision arithmetic as compared to the alternative.
\end{abstract}

\begin{IEEEkeywords}
Gaussian filtering, Gaussian smoothing, Statistical Linear Regression
\end{IEEEkeywords}

\section{Introduction}
Consider the probabilistic state-space model given by
\begin{subequations}
\begin{align}
x_0 &\sim \mathcal{N}(\mu_0, \Sigma_0), \\
x_m \mid x_{m-1} &\sim \mathcal{N}(f(x_{m-1}) , Q(x_{m-1}) ), \\
y_m \mid x_m &\sim \mathcal{N}( c(x_m), R(x_m)),
\end{align}
\end{subequations}
where $f$ is the transition function, $Q$ the transition covariance, $c$ the observation function, $R$ the observation covariance.
The vector $x_m$ is the latent and $y_m$ is an imperfect observation of the former.
Let $\pi_{m\mid 1:k}$ denote the following conditional density
\begin{equation*}
\pi_{m\mid 1:k}(x) = \pi(x_m \mid y_{1:k})
\end{equation*}
so that the \emph{filtering density} at time $m$ is given by $\pi_{m\mid 1:m}$ and the smoothing
density based on measurements $y_1, \ldots, y_n$ is given by $\pi_{m\mid 1:n}$ for $m  \leq n$.

The filtering density, $\pi_{m\mid 1:m}$, satisfies the following prediction/correction recursion \cite{Ho1964,Sarkka2023}
\begin{subequations}
\begin{align*}
\pi_{m+1\mid 1:m}(x) = \int \mathcal{N}(x; f(x'), Q(x')) \pi_{m \mid 1:m}(x') \dif x', \\
\pi_{m+1 \mid 1:m+1}(x) = \frac{\mathcal{N}(y; c(x), R(x)) \pi_{m+1\mid 1:m}(x) }{ \int  \mathcal{N}(y; c(x'), R(x')) \pi_{m+1\mid 1:m}(x') \dif x'}.
\end{align*}
\end{subequations}
Now define backward transition density $b_{m\mid m+1}$:
\begin{equation}\label{eq:backward-kernel}
b_{m\mid m+1}(x \mid x') = \frac{ \pi_{m \mid 1:m}(x) \mathcal{N}(x'; f(x), Q(x))}{\pi_{m + 1 \mid 1:m}(x')},
\end{equation}
then the smoothing density follows the following, backward, recursion \cite{Kitagawa1987,Cappe2009}
\footnote{
In linear Gaussian state space models,
this is equivalent to the Rauch--Tung--Striebel recursion,
see for example \cite{Tronarp2022}.
}
\begin{equation*}
\pi_{m \mid 1:n}(x) = \int b_{m \mid m+1}(x\mid x') \pi_{m+1 \mid 1:n}(x') \dif x'.
\end{equation*}
It may be noted that $b_{m\mid m+1}(x \mid x')$ is simply Bayes' rule when the prior is $\pi_{m \mid 1:m}(x)$ and the likelihood is $ \mathcal{N}(x'; f(x), Q(x))$.

The problem of Bayesian filtering and smoothing for linear Gaussian models was largely solved by
the development of the Kalman filter and Rauch--Tung--Striebel smoother \cite{Kalman1961,Rauch1965}.
However, the non-linear case is intractable in general,
hence the flourishing field of approximate inference \cite{Sarkka2023}.

Perhaps the earliest approach was based on linearization by Taylor series, leading to the \emph{extended Kalman filter} (EKF) \cite{Kailath2000}.
This thinking lead to the optimization based direction in approximate state estimation \cite{Bell1993,Bell1994,Skoglund2015,Aravkin2017,Gao2019}.

Later on, alternative approaches were devised based on cubature based moment approximation and
linear minimum mean square error estimation \cite{Julier2000,Ito2000,Julier2004,Wu2006,Arasaratnam2009}.
The cubature approach has been shown to be equivalent to an alternative linearization strategy
termed \emph{statistical linear regression} \cite{Lefebvre2002,Arasaratnam2007}.

Taking inspiration from the optimization based viewpoint, iterative variants of the statistical linear regression method have been developed \cite{Garcia2015,Garcia2016,Tronarp2018,Garcia2019a,Tronarp2019a,Skoglund2019}.
Iterative statistical linear regression has also found applications in importance density design \cite{Hostettler2020} and Gaussian process classification \cite{Garcia2019b}.

When the covariance matrices are ill-conditioned,
the Ricatti recursion for the covariance in linear state estimators is prone to numerical instability.
This lead to the development of recursions for the square-root of the covariance matrix instead \cite{Kaminski1971,Kailath2000,Anderson2012}.
The reader may consult \cite{Kailath2000} for a catalog of such approaches.
Because Taylor linearization does not modify the covariance matrix, this trivially extends to the EKF.
Various square root methods have also been employed for the cubature based methods \cite{Arasaratnam2008,Tang2012,Bhaumik2014,Menegaz2015,Arasaratnam2011,Milschewski2017}.
However, analogous methods for statistical linear regression based estimators have not been examined to the same extent.
As far as we are aware, the only contribution in this direction is given by \cite{Yaghoobi2022},
which proposes a method based on sequential updating and downdating.
However, the downdating problem can be ill-conditioned \cite{Stewart1979,Bojanczyk1987,Bojanczyk1991},
and may fail completely due to round off errors \cite{Milschewski2017}.

Other notable solutions to the state estimation problem are the projection based approach \cite{Tronarp2019b},
variational inference \cite{Gultekin2017,Courts2021},
and sequential Monte Carlo \cite{Gordon1993,Arulampalam2002,Djuric2003,Doucet2000,Doucet2001,Cappe2007}.
However, they fall outside of the scope of the present discussion.

\subsection{Contribution}
This article develops an approach for implementing statistical linear regression based state estimators in square root form.
In contrast to \cite{Yaghoobi2022}, the proposed method does not rely on Cholesky downdates, but rather QR decompositions.
Consequently, the proposed algorithm is expected to remain numerically reliable when the state estimation problem is ill-conditioned
or when the state estimation algorithm is implemented in low precision arithmetic,
whereas Cholesky downdates may fail \cite{Milschewski2017}.

\subsection{Notation}
For a matrix $A$, its adjoint is denoted by $A^*$.
For a positive definite matrix $\Pi$, its lower triangular Cholesky factor is denoted by $\Pi^{1/2}$
so that $\Pi = \Pi^{1/2}\Pi^{*/2}$, where $\Pi^{*/2}$ is the upper triangular Cholesky factor.
Furthermore, for matrices $A_1$ and $A_2$ the relation $A_1 \cong A_2$ holds if and only if
$A_1^* A_1 = A_2^* A_2$.
In particular, $A_1 \cong (A_1^* A_1)^{*/2}$, and a Cholesky factor of $A_1^* A_1$ may thus be obtained
by taking the upper triangular factor in the QR decomposition of $A_1$.

\section{Gaussian state estimation in square-root form}
In this section, the square root implementations of Gaussian state estimators is summarized.
That is, filtering and smoothing the following model.

\begin{subequations}
\begin{align}
x_m \mid x_{m-1} &\sim \mathcal{N}(\Phi_m x_{m-1}, Q_m), \\
y_m \mid x_m &\sim \mathcal{N}(C_m x_m, R_m).
\end{align}
\end{subequations}

The following lemma is fundamental for the implementation of square-root based filters and smoothers in linear Gaussian models \cite[Chapter 6]{Anderson2012}.

\begin{lemma}\label{lem:inversion}
The model following model
\begin{align*}
u &\sim \mathcal{N}(\bar{u}, \Pi), \\
v \mid u &\sim \mathcal{N}(\Psi u, \Omega),
\end{align*}
is equivalent to
\begin{align*}
v &\sim \mathcal{N}(C \bar{u}, P), \\
u \mid v &\sim \mathcal{N}(\bar{u} + \Gamma (v - \Psi \bar{u}), \Sigma),
\end{align*}
where the parameters $P, \Gamma$, and $\Sigma$ are given by
\begin{align}
\begin{pmatrix} \Omega^{*/2} & 0 \\ \Pi^{*/2}\Psi^* & \Pi^{*/2} \end{pmatrix}
&\cong \begin{pmatrix} P^{*/2} & \bar{\Gamma}^* \\ 0 & \Sigma^{*/2} \end{pmatrix}, \\
\Gamma &= \bar{\Gamma} P^{-1/2}.
\end{align}
\end{lemma}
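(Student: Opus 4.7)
The strategy is to reduce the lemma to a direct algebraic verification: compute the Gram matrices of the two block matrices in the $\cong$ relation, then check that the resulting block entries match the standard marginalization/conditioning formulas for linear Gaussian models, which can be assumed as known.

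\textbf{Step 1: Standard identities.} I would first recall (or derive in one line from completing the square) the textbook identities for the joint Gaussian $(u,v)$ induced by the prior on $u$ and the likelihood of $v$: the marginal covariance of $v$ is $P = \Omega + \Psi\Pi\Psi^*$, the Kalman-type gain is $\Gamma = \Pi\Psi^* P^{-1}$, the conditional mean is $\bar u + \Gamma(v-\Psi\bar u)$, and the conditional covariance is $\Sigma = \Pi - \Gamma\Psi\Pi$. These establish what the targets $P$, $\Gamma$, $\Sigma$ must be.

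\textbf{Step 2: Expand the left-hand Gram matrix.} Let $L$ denote the left block matrix. Multiplying $L^*L$ block-wise gives
\begin{equation*}
L^* L = \begin{pmatrix} \Omega + \Psi\Pi\Psi^* & \Psi\Pi \\ \Pi\Psi^* & \Pi \end{pmatrix}.
\end{equation*}
Let $R$ denote the right block matrix. Expanding $R^*R$ block-wise (using $P^{1/2}P^{*/2}=P$ and $\Sigma^{1/2}\Sigma^{*/2}=\Sigma$) yields
\begin{equation*}
R^* R = \begin{pmatrix} P & P^{1/2}\bar\Gamma^* \\ \bar\Gamma P^{*/2} & \bar\Gamma\bar\Gamma^* + \Sigma \end{pmatrix}.
\end{equation*}

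\textbf{Step 3: Match blocks.} Enforcing $L^*L = R^*R$ block by block gives three equations. The $(1,1)$ block immediately yields $P = \Omega + \Psi\Pi\Psi^*$, matching the marginal covariance from Step 1. The $(1,2)$ block gives $P^{1/2}\bar\Gamma^* = \Psi\Pi$, so $\bar\Gamma = \Pi\Psi^* P^{-*/2}$; combining with the definition $\Gamma = \bar\Gamma P^{-1/2}$ recovers $\Gamma = \Pi\Psi^* P^{-1}$, the Kalman gain. Finally, the $(2,2)$ block gives $\Sigma = \Pi - \bar\Gamma\bar\Gamma^* = \Pi - \Pi\Psi^* P^{-1}\Psi\Pi = \Pi - \Gamma\Psi\Pi$, matching the conditional covariance. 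All three targets from Step 1 are therefore obtained by reading them off the QR-style factor on the right.

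\textbf{Main obstacle.} There is no serious obstacle once the Gram matrices are expanded; the proof is essentially a block calculation. The only point requiring a little care is keeping the Cholesky-factor conventions straight (lower-triangular $\Pi^{1/2}$ with $\Pi = \Pi^{1/2}\Pi^{*/2}$, and likewise for $P$, $\Sigma$, $\Omega$) so that the adjoints and the derivation of $\Gamma$ from $\bar\Gamma$ are consistent. One should also observe that $P$ and $\Sigma$ obtained this way are automatically positive definite whenever $\Pi$ and $\Omega$ are, so that the Cholesky factors $P^{1/2}$ and $\Sigma^{1/2}$ appearing on the right-hand side of the $\cong$ relation exist and the upper-triangular factor from the QR decomposition of $L$ is unique up to sign, justifying that the identification is well posed.
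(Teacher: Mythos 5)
Your proof is correct: the block Gram-matrix expansion of both sides of the $\cong$ relation is exactly the standard argument (the paper itself states the lemma without proof, citing Anderson and Moore, Ch.~6, where this same completion-of-squares/array-algorithm verification appears), and your identifications $P = \Omega + \Psi\Pi\Psi^*$, $\Gamma = \bar\Gamma P^{-1/2} = \Pi\Psi^* P^{-1}$, $\Sigma = \Pi - \bar\Gamma\bar\Gamma^*$ match the textbook marginal and conditional moments, with the positive-definiteness remark correctly justifying existence and essential uniqueness of the triangular factor. The only cosmetic point is that the marginal mean of $v$ should read $\Psi\bar u$ (the $C$ in the lemma statement is a typo), which your Step 1 implicitly corrects.
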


The interpretation of lemma \ref{lem:inversion} is as follows.
When $u = x_n$, $\bar{u}$ and $\Pi$ are the respective moments of the predictive distribution,
and $v = y_n$, then lemma \ref{lem:inversion} provides a way to compute the Cholesky factors of the filtering
and marginal measurement covariances, and filter gain by just one QR decomposition and one linear solve with respect to
$P^{1/2}$ (the Cholesky factor of the marginal measurement covariance).

Similarly, when $u = x_{n-1}$, $\bar{u}$ and $\Pi$ are the respective moments of the filtering distribution distribution,
and $v = x_n$, then lemma \ref{lem:inversion} provides a way to compute the Cholesky factors of the predictive
and backward process covariances (the covariance of \eqref{eq:backward-kernel}), and smoother gain by just one QR decomposition and one linear solve with respect to
$P^{1/2}$ (the Cholesky factor of the predictive covariance).

It is important to note that lemma \ref{lem:inversion} only requires one linear solve with the Cholesky factor of $P$,
rather than two which is required in the cubature approaches of \cite{Arasaratnam2009,VanDerMerwe2001}.
Consequently, if statistical linear regression can be used to directly obtain a Cholesky factor of $P$,
then it can be expected to be more numerically robust when $P$ is ill-conditioned.

\section{Statistical linear regression in square root form}
In this section, a method for implementing the statistical linear regression algorithm in square root form is developed.
In contrast to the method of \cite{Yaghoobi2022}, it does not make use of Cholesky downdates but
employs QR decompositions instead (equivalent to Cholesky updates).

\subsection{Statistical linear regression}
The statistical linear regression problem is concerned with finding a linear Gaussian approximation
to the conditional density in the following model \cite{Lefebvre2002}
\begin{align}
u &\sim \mathcal{N}(\bar{u}, \Pi), \\
v \mid u &\sim \mathcal{N}(a(u), \Omega),
\end{align}
by the following minimization problem
\begin{subequations}
\begin{align*}
e_{\Psi_0,b_0}(u, v) &= v - \Psi_0 u - b_0,\\
\bar{\Psi}, \bar{b} &= \arg\,\min_{\Psi_0, b_0} \, \E\Big[\abs{e_{\Psi_0,b_0}(u, v)}^2\Big].
\end{align*}
\end{subequations}
Let $\bar{e}(u,v) = e_{\bar{\Psi}, \bar{b}}(u,v)$, then the residual covariance matrix is given by
\begin{equation}
\bar{\Omega} = \E[\bar{e}(u, v) \bar{e}^*(u, v)].
\end{equation}
This results in the following Gaussian approximation \cite{Garcia2015,Garcia2016}
\begin{equation*}
\mathcal{N}(a(u), \Omega) \approx \mathcal{N}(\bar{\Psi}u + \bar{b}, \bar{\Omega}).
\end{equation*}
The solution to the statistical linear regression problem is given by \cite{Lefebvre2002,Garcia2015,Garcia2016}
\begin{subequations}\label{eq:slrexact}
\begin{align}
\bar{a} &= \E[a(u)], \\
\bar{\Psi} &= \E[a(u)(u - \bar{u})^*] \Pi^{-1}, \\
\bar{b} &= \bar{a} - \bar{\Psi} \bar{u}, \\
\bar{\Omega} &= \Omega + \E[(a(u) - \bar{a}) (a(u) - \bar{a})^*] - \bar{\Psi} \Pi \bar{\Psi}^*.
\end{align}
\end{subequations}

\subsection{Cubature implementation of statistical linear regression}

The expectations in \eqref{eq:slrexact} are generally intractable.
They can be transformed to integration with respect to a standard Gaussian by change of variables
\begin{equation}\label{eq:change-of-variables}
z = \Pi^{-1/2}(u - \bar{u}).
\end{equation}
Thus for an arbitrary function $\varphi$, its expectation is given by
\begin{equation*}
\E[\varphi(u)] = \E[\varphi(\bar{u} + \Pi^{1/2}z)].
\end{equation*}
One way of approximating these expectations is thus to take
a cubature rule with respect to the standard Gaussian, using $p$ nodes, with weights and nodes given by
\begin{subequations}
\begin{align*}
w &= \begin{pmatrix} w_1 & \ldots & w_p \end{pmatrix}, \\
Z &= \begin{pmatrix} z_1, \ldots, z_p \end{pmatrix}.
\end{align*}
\end{subequations}
This is then transformed into a cubature rule, $(w, U)$, with respect $\mathcal{N}(u; \bar{u}, \Pi)$ via \eqref{eq:change-of-variables}
according to
\begin{subequations}
\begin{align}
\Delta u_i &= \Pi^{1/2} z_i, \\
u_i &= \bar{u} + \Delta u_i,
\end{align}
\end{subequations}
where the $i$th column of $U$ us $u_i$.
The expectation of an arbitrary function $\varphi$, is then approximated by
\begin{equation*}
\E[\varphi(u)] \approx \sum_{i=1}^p w_i \varphi(u_i).
\end{equation*}
Now, define the following quantities
\begin{subequations}
\begin{align}
W &= \diag w \\
\Delta U &= \begin{pmatrix} \Delta u_1, \ldots, \Delta u_p \end{pmatrix}, \\
A &= \begin{pmatrix} a(u_1), \ldots, a(u_p) \end{pmatrix}, \\
\Delta A &= \begin{pmatrix} a(u_1) - \bar{a}, \ldots, a(u_p) - \bar{a} \end{pmatrix}.
\end{align}
\end{subequations}
An approximate solution to the statistical linear regression problem is then given by
\begin{subequations}\label{eq:slrapprox}
\begin{align}
\bar{a} &= A w, \\
\bar{\Psi} &= \Delta A W \Delta U^* \Pi^{-1}, \\
\bar{b} &= \bar{a} - \bar{\Psi} \bar{u},
\end{align}
\end{subequations}
and the residual covariance matrix is approximated by
\begin{equation}\label{eq:approx-residual-covariance}
\bar{\Omega} = \Omega + \Delta A W \Delta A^* - \bar{\Psi} \Pi \bar{\Psi}^*.
\end{equation}

\subsection{Square root implementation of statistical linear regression}
In this section, a downdate free method for implementing the statistical linear regression algorithm is developed.
The following assumption on the chosen cubature rule is required.

\begin{assumption}\label{ass:goodature}
The cubature rule with nodes $\{u_i\}_{i=1}^p$ and weights $\{w_i\}_{i=1}^p$ for $\mathcal{N}(\mu, \Pi)$ satisfy the following:
\begin{itemize}
\item[(i)] It is polynomially exact of at least degree 2.
\item[(ii)] The weights are positive.
\end{itemize}
\end{assumption}

Assumption \ref{ass:goodature} is satisfied by for example Gauss--Hermite quadrautre \cite{Ito2000} or spherical radial cubature \cite{Arasaratnam2009}.
It does not in general hold for the unscented transform \cite{Julier2004}, since for some parameters it has a negative weight \cite{Sarkka2023}.

\begin{proposition}\label{prop:update}
Let $\{x_i\}_{i=1}^p$ and $\{w_i\}_{i=1}^p$ be a cubature rule for $\mathcal{N}(\mu, \Pi)$ that satisfy assumption \ref{ass:goodature}.
Define the matrix:
\begin{equation*}
E = \Delta A - \bar{\Psi} \Delta U,
\end{equation*}
then $\bar{\Omega}$ is given by
\begin{equation}\label{eq:sqrtslr}
\bar{\Omega} = \Omega + E W E^*.
\end{equation}
\end{proposition}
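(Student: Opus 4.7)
The plan is a direct algebraic verification: substitute the definition of $E$ into the right-hand side of \eqref{eq:sqrtslr} and match it against the existing expression \eqref{eq:approx-residual-covariance}.

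First I would expand
\begin{equation*}
E W E^* = \Delta A W \Delta A^* - \Delta A W \Delta U^* \bar{\Psi}^* - \bar{\Psi} \Delta U W \Delta A^* + \bar{\Psi} \Delta U W \Delta U^* \bar{\Psi}^*,
\end{equation*}
so that \eqref{eq:sqrtslr} becomes $\Omega + \Delta A W \Delta A^*$ plus three cross terms involving $\bar{\Psi}$. The task is then to show these three cross terms collapse to $-\bar{\Psi}\Pi \bar{\Psi}^*$, which by \eqref{eq:approx-residual-covariance} gives $\bar{\Omega}$.

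The key observation, and the only place where assumption \ref{ass:goodature} is used, is that degree-$2$ polynomial exactness of the cubature rule applied to the quadratic polynomial $(u-\bar{u})(u-\bar{u})^*$ yields
\begin{equation*}
\Delta U W \Delta U^* = \sum_{i=1}^p w_i (u_i-\bar{u})(u_i-\bar{u})^* = \E[(u-\bar u)(u-\bar u)^*] = \Pi.
\end{equation*}
Combining this with the defining relation $\bar{\Psi} = \Delta A W \Delta U^* \Pi^{-1}$, hence $\Delta A W \Delta U^* = \bar{\Psi} \Pi$, each of the three cross terms equals $\bar{\Psi} \Pi \bar{\Psi}^*$. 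Their signed sum is $-\bar{\Psi}\Pi \bar{\Psi}^* - \bar{\Psi}\Pi \bar{\Psi}^* + \bar{\Psi}\Pi \bar{\Psi}^* = -\bar{\Psi}\Pi\bar{\Psi}^*$, and the result follows.

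There is no real obstacle: the proposition is essentially an algebraic rearrangement enabled by degree-$2$ exactness. I would, however, emphasize in the write-up why the positive-weight clause of assumption \ref{ass:goodature} matters for the \emph{use} of the proposition even though it is not needed for the identity itself: positivity lets one write $W^{1/2}$ as a real diagonal matrix and hence factor $EWE^* = (EW^{1/2})(EW^{1/2})^*$, so that the Cholesky factor of $\bar{\Omega}$ can be extracted by a QR decomposition of $\bigl(\,\Omega^{*/2} \ \ W^{1/2} E^*\bigr)^*$ via the $\cong$ machinery of lemma \ref{lem:inversion}, avoiding the downdate step present in \cite{Yaghoobi2022}.
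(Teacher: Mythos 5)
Your proposal is correct and follows essentially the same route as the paper's proof: expand $EWE^*$, use $\Delta A W \Delta U^* = \bar{\Psi}\Pi$ from \eqref{eq:slrapprox} and degree-$2$ exactness to get $\Delta U W \Delta U^* = \Pi$, and collapse the cross terms to $-\bar{\Psi}\Pi\bar{\Psi}^*$. Your added remark that positivity of the weights is needed only for the subsequent factorization $EWE^* = (EW^{1/2})(EW^{1/2})^*$, not for the identity itself, matches the paper's own discussion following the proposition.
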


\begin{proof}
By direct calculation from the definition of $E$
\begin{equation*}
\begin{split}
E W E^* &= (\Delta A - \bar{\Psi} \Delta U) W (\Delta A - \bar{\Psi} \Delta U)^* \\
&= \Delta A W \Delta A^* - \Delta A W \Delta U^* \bar{\Psi}^*\\
&\quad - \bar{\Psi} \Delta U W \Delta A^* + \bar{\Psi} \Delta U W \Delta U^* \bar{\Psi}^*
\end{split}
\end{equation*}
and by \eqref{eq:slrapprox} $\Delta A W \Delta U^* = \bar{\Psi} \Pi$, hence
\begin{equation*}
E W E^*
= \Delta A W \Delta A^* - 2\bar{\Psi} \Pi \bar{\Psi}^* + \bar{\Psi} \Delta U W \Delta U^* \bar{\Psi}^*.
\end{equation*}
Furthermore, by assumption \ref{ass:goodature} (polynomial exactness of at least degree 2) $\Delta U W \Delta U^* = \Pi$, therefore
\begin{equation*}
E W E^* = \Delta A W \Delta A^* - \bar{\Psi} \Pi \bar{\Psi}^*.
\end{equation*}
In view of \eqref{eq:approx-residual-covariance}, adding $\Omega$ to both sides of the previous equation gives the desired conclusion.
\end{proof}

Proposition \ref{prop:update} suggests that $\bar{\Omega}$ may be computed with only updates.
This is readily verified since assumption \ref{ass:goodature} ensures that $W$ is positive definite,
and consequently $W^{1/2}$ is well-defined.

\begin{corollary}\label{cor:main}
With assumption \ref{ass:goodature} still in effect, the matrix $\bar{\Omega}$ admits the following factorization
\begin{equation}\label{eq:slrsqrt1}
\bar{\Omega}^{*/2} \cong \begin{pmatrix} \Omega^{1/2} & E W^{1/2} \end{pmatrix}^*.
\end{equation}
\end{corollary}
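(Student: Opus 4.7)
The plan is to show that the two sides of \eqref{eq:slrsqrt1} satisfy the defining property of $\cong$, namely $A_1^* A_1 = A_2^* A_2$, by reducing everything to the identity $\bar{\Omega} = \Omega + E W E^*$ supplied by Proposition \ref{prop:update}. Writing $A_1 = \bar{\Omega}^{*/2}$, the left hand side yields $A_1^* A_1 = \bar{\Omega}^{1/2}\bar{\Omega}^{*/2} = \bar{\Omega}$ directly from the Cholesky factorization convention laid out in the Notation subsection.

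Next, I would need to check that the right hand side is well-defined. Since assumption \ref{ass:goodature}(ii) forces $W = \diag w$ to have strictly positive diagonal, $W$ is positive definite and its Cholesky factor $W^{1/2}$ exists. Hence $A_2 = \begin{pmatrix} \Omega^{1/2} & E W^{1/2} \end{pmatrix}^{*}$ is a legitimate tall matrix, and its Gram matrix is a simple $1 \times 1$ block expansion:
\begin{equation*}
A_2^* A_2 = \Omega^{1/2}\Omega^{*/2} + E W^{1/2} W^{*/2} E^* = \Omega + E W E^*.
\end{equation*}

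Invoking Proposition \ref{prop:update} then gives $A_2^* A_2 = \bar{\Omega} = A_1^* A_1$, which by definition means $A_1 \cong A_2$, establishing \eqref{eq:slrsqrt1}. Operationally, the upper triangular Cholesky factor $\bar{\Omega}^{*/2}$ can then be extracted as the triangular factor in the QR decomposition of $A_2$, which is the main practical payoff of the statement.

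There is essentially no hard step here: the proof is a bookkeeping exercise in the $\cong$ relation, provided Proposition \ref{prop:update} is already in hand. The only point that deserves explicit attention is justifying the existence of $W^{1/2}$ from assumption \ref{ass:goodature}(ii), since without positive weights the stated square-root form of the right hand side would not be meaningful.
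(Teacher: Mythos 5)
Your proof is correct and follows essentially the same route the paper takes: the corollary is an immediate consequence of Proposition \ref{prop:update} once one notes, as the paper does in the remark preceding the corollary, that assumption \ref{ass:goodature}(ii) makes $W$ positive definite so that $W^{1/2}$ exists, after which the Gram-matrix computation $\Omega^{1/2}\Omega^{*/2} + E W^{1/2} W^{*/2} E^* = \Omega + E W E^* = \bar{\Omega}$ verifies the $\cong$ relation. Your explicit attention to the existence of $W^{1/2}$ is exactly the one substantive point the paper itself singles out.
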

In view of corollary \ref{cor:main}, a Cholesky factor of $\bar{\Omega}$ may be obtained by taking the upper triangular part in the QR decomposition of
the right-hand side of \eqref{eq:slrsqrt1}.

\subsection{Extension to state-dependent noise}

An extension to state-dependent noise, $\Omega = \Omega(u)$, can be obtained provided
there is a numerically safe way to evaluate $\Omega^{1/2}(u)$ \cite{Yaghoobi2022}.
In this case, the cubature approximation of the residual covariance matrix is given by (c.f. \cite{Tronarp2018})
\begin{equation*}
\bar{\Omega} = \sum_{i=1}^p w_i \Omega(u_i) + \sum_{i=1}^p \Delta A W \Delta A^* - \bar{\Psi} \Pi \bar{\Psi}^*,
\end{equation*}
which by proposition \ref{prop:update} simplifies to
\begin{equation}
\bar{\Omega} = \sum_{i=1}^p w_i \Omega(u_i) + E W E^*.
\end{equation}
The Cholesky factor of $\bar{\Omega}^{*/2}$ may then be obtained by taking the upper triangular factor in the QR decomposition of
\begin{equation}
\begin{pmatrix} E W^{1/2} & \sqrt{w_1}\Omega^{1/2}(u_i) & \ldots & \sqrt{w_p}\Omega^{1/2}(u_p) \end{pmatrix}^*.
\end{equation}

\section{Experimental results}

Consider the continuous-time coordinated turn model \cite{BarShalom2004}
\begin{equation}\label{eq:model}
\dif \begin{pmatrix} p(t) \\ \dot{p}(t) \\ \omega(t) \end{pmatrix}
= \begin{pmatrix} \dot{p}(t) \dif t \\ \omega(t)J p(t) \dif t + B_0 \dif w_p(t) \\ \sigma_\omega \dif w_\omega(t) \end{pmatrix},
\end{equation}
where $p$ is the position, $\dot{p}$ is the velocity, and $\omega$ is the turn rate.
The matrices $J$ and $B_0$ are given by
\begin{equation*}
J = \begin{pmatrix} 0 & -1 \\ 1 & 0 \end{pmatrix}, \quad B_0 = \diag(\sigma_x, \sigma_y).
\end{equation*}
and $w_p, w_\omega$ are standard Wiener processes of appropriate dimension.
An approximation may be obtained by holding $\omega$ dependence of the drift fixed during sampling intervals so that the model is linear on $[t_{m-1}, t_m]$ conditionally on $\omega(t_{m-1})$.
The approximate model may then be written as
\begin{equation*}
\dif x(t) = A(\omega(t_{m-1})) x(t) \dif t + B \dif w(t), \quad t \in [t_{m-1}, t_m),
\end{equation*}
where the definition of $A$ and $B$ are clear from \eqref{eq:model}.
This results in the transition following densities
\begin{equation*}
\mathcal{N}\Big( e^{A(\omega(t_{m-1}))(t_m - t_{m-1})}x(t_{m-1}), Q(\omega(t_{m-1}), t_m - t_{m-1})  \Big).
\end{equation*}
Both the matrix exponential and a Cholesky factor of $Q(\omega(t_{m-1}), t_m - t_{m-1})$ may be computed to high precision
by the algorithm of \cite{Stillfjord2023}.
This leads to a variant of the model used by \cite{Arasaratnam2009},
that is more faithful to the continuous time dynamics.
The initial state distribution is Gaussian with parameters
\begin{subequations}
\begin{align*}
\mu_0 &= \begin{pmatrix} 1000.0 & 1000.0 & 300.0 & 0.0 & -0.0523 \end{pmatrix}, \\
\Sigma_0 &= \operatorname{diag}\begin{pmatrix} 10.0 & 10.0 & 3.162 & 3.162 & 0.316 \end{pmatrix},
\end{align*}
\end{subequations}
the sampling interval is fixed, $\delta t = t_m - t_{m-1}$, and the dynamics parameters are set to
\begin{subequations}
\begin{align*}
\delta t = 1, \quad
\sigma_x = \sigma_y = 0.03, \quad
\sigma_\omega = 0.013.
\end{align*}
\end{subequations}
The state is measured according to
\begin{equation*}
\begin{pmatrix} r(t_m) \\ \theta(t_m) \end{pmatrix}
= \begin{pmatrix} \norm{p(t_m)} + w_r(t_m) \\ \tan^{-1}(p_2(t_m), p_1(t_m)) + w_\theta(t_m) \end{pmatrix},
\end{equation*}
where $w_r(t_m)$ and $w_\theta(t_m)$ are zero mean Gaussian distributed with standard deviations $\sigma_r$ and $\sigma_\theta$, respectively,
given by
\begin{subequations}
\begin{align*}
\sigma_r = 10, \quad
\sigma_\theta = 0.0031.
\end{align*}
\end{subequations}

The proposed implementation (prop) is compared against a reference implementation \cite{Yaghoobi2022} (ref) by
solving the smoothing problem using and iterative statistical linear regression based smoother with 10 iterations \cite{Garcia2016,Tronarp2018}.
The comparison is made in both single and double precision floating point arithmetic.
The norm of the errors in $p(t_m)$, $\dot{p}(t_m)$, and $\omega(t_m)$ are recorded over 100 Monte Carlo generated trajectories of length 101.

The results are shown in figure \ref{fig:results} for the proposed method in both single and double precision arithmetic.
The reference implementation is only shown for double precision as it fails in single precision, due to Cholesky downdates.
As can be seen from the figure, the errors are indistinguishable.
This suggests the proposed implementation ought to be preferred as it more robust to ill-conditioning / low precision arithmetic.

\begin{figure}
\centering
\includegraphics[scale=0.95]{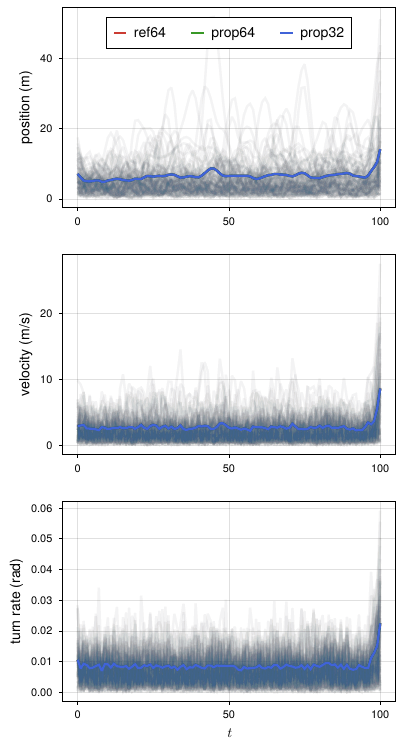}
\caption{
The errors in position (top), velocity (middle), and turn rate (bottom),
for the reference implementation \cite{Yaghoobi2022} (ref) and the proposed implementation (prop).
The average errors across Monte Carlo trials is shown in solid, whereas the error for each trial is shown in transparent color.
The floating point precision used by the algorithm is indicated in the legend by 32 and 64 for single double precision, respectively.
Note that the results of the algorithms are completely indistinguishable.
Except for ref32 which can not run at all because of failing Cholesky downdates,
hence its absence in the figure.
}\label{fig:results}
\end{figure}

\section{Conclusion}
In this article, we have shown that statistical linear regression in square root form
can be implemented in a numerically robust way and demonstrated
its positive impact on tracking performance in an ill-conditioned problem,
where the present best implementation by \cite{Yaghoobi2022} fails in single precision arithmetic.

\bibliographystyle{IEEEtran.bst}
\bibliography{refs}

\end{document}